\documentclass[runningheads]{llncs}

\usepackage{graphicx} 
\usepackage[dvipsnames]{xcolor}
\usepackage{amsmath, amssymb, mathtools}  
\usepackage{bbding}
\usepackage{graphics}
\usepackage{wrapfig}
\usepackage{caption}
\usepackage{verbatim}
\usepackage{paralist}
\usepackage{cancel}
\usepackage{xspace}
\usepackage{multicol}
\usepackage{subcaption}
\usepackage{algorithm}
\usepackage{tabularx}
\usepackage{diagbox}
\usepackage{cite}
\usepackage{algorithmic}

\usepackage{tikz,ifthen}
\usetikzlibrary{arrows,trees,backgrounds,automata,shapes,decorations,plotmarks,fit,calc,positioning,shadows,chains}
\tikzstyle{every pin edge}=[<-,shorten <=1pt]
\tikzstyle{neuron}=[circle,fill=black!25,minimum size=17pt,inner sep=0pt]
\tikzstyle{input neuron}=[neuron, fill=green!40]
\tikzstyle{output neuron}=[neuron, fill=red!40]
\tikzstyle{hidden neuron}=[neuron, fill=blue!40]
\tikzstyle{constructed neuron}=[neuron, fill=orange!50]
\tikzstyle{annot} = [text width=6em, text centered]
\tikzstyle{nnedge} = [-{stealth},shorten >=0.1cm, shorten <=0.05cm,line width=0.8pt,black]
\usetikzlibrary{calc}
\usetikzlibrary{positioning}

\setlength{\marginparwidth}{3.5cm}

\newcommand{\operator}[1]{\normalfont \texttt{#1}}
\newcommand{\certpred}{\operator{unsat}}
\newcommand{\abst}{\operator{abstract}}
\newcommand{\refine}{\operator{refine}}
\newcommand{\proveabst}{\operator{prove-over-approximation}}
\newcommand{\provequery}{\operator{verify-with-proofs}}
\newcommand{\verify}{\operator{verify}}

\newcommand{\interval}{\operator{bounds}}

\newcommand{\sat}{\operator{SAT}}
\newcommand{\unsat}{\operator{UNSAT}}

\newcommand{\x}{\mathbf{x}}
\newcommand{\h}{\mathbf{h}}
\newcommand{\y}{\mathbf{y}}
\newcommand{\nn}{f}
\newcommand{\examplenn}{\mathtt{f}}
\newcommand{\examplerednn}{\redVar{\examplenn}}

\newcommand{\numNeurons}{n}
\newcommand{\numLayers}{L}

\newcommand{\R}{\mathbb{R}}

\newcommand{\nnW}{\mathbf{W}}
\newcommand{\nnb}{\mathbf{b}}
\newcommand{\nnActFun}{\phi}

\newcommand{\bounds}{\mathcal{I}}
\newcommand{\bucket}{\mathcal{B}}
\newcommand{\redVar}[1]{\widehat{#1}}
\newcommand{\redNN}[0]{\redVar{\nn}}
\newcommand{\redBounds}[0]{\redVar{\bounds}}
\newcommand{\rednnW}{\redVar{\nnW}}
\newcommand{\rednnb}{\redVar{\nnb}}

\newcommand{\nnInputSet}{\mathcal{P}}
\newcommand{\nnHiddenSet}{\mathcal{H}}
\newcommand{\nnOutputSet}{\mathcal{Y}}
\newcommand{\exactSet}[1]{#1^*}

\newcommand{\nnHiddenSetExact}{\exactSet{\nnHiddenSet}}
\newcommand{\nnOutputSetExact}{\exactSet{\nnOutputSet}}

\newcommand{\nnHiddenSetRed}{\redVar{\nnHiddenSet}}
\newcommand{\nnOutputSetRed}{\redVar{\nnOutputSet}}

\newcommand{\drule}[2]{
	\renewcommand{\arraystretch}{1.2}
	\(\begin{array}{c}
		#1 \\
		\hline 
		#2
	\end{array}\)
}
\newcommand{\rulename}[1]{\ensuremath{\mathsf{#1}}\xspace}

\newcommand{\dnnvquery}[3]{\langle #1,\mathcal{#2},\mathcal{#3} \rangle}

\addtolength{\topmargin}{-0.2cm}
\addtolength{\textheight}{0.9cm}
\addtolength{\intextsep}{-0.4cm}
\addtolength{\belowcaptionskip}{-0.2cm}
\addtolength{\textfloatsep}{-0.4cm}

\usepackage{hyperref}

\usepackage{cleveref}

\crefname{section}{Sec.}{Sec.}
\crefname{subsection}{Sec.}{Sec.}
\crefname{figure}{Fig.}{Fig.}
\crefname{algorithm}{Alg.}{Alg.}
\crefname{table}{Tab.}{Tab.}
\crefname{example}{Example}{Example}
\crefname{definition}{Def.}{Def.}
\crefname{proposition}{Prop.}{Prop.}
\crefname{theorem}{Thm.}{Thm.}
\crefname{lemma}{Lemma}{Lemmas}
\crefname{corollary}{Cor.}{Cor.}
\crefname{assumption}{Assumption}{Assumptions}
\crefname{appendix}{Appendix}{Appendix}
\crefname{equation}{Eq.}{Eq.}

\Crefname{section}{Sec.}{Sec.}
\Crefname{subsection}{Sec.}{Sec.}
\Crefname{figure}{Fig.}{Fig.}
\Crefname{algorithm}{Alg.}{Alg.}
\Crefname{table}{Tab.}{Tab.}
\crefname{example}{Example}{Example}
\Crefname{definition}{Def.}{Def.}
\Crefname{proposition}{Prop.}{Prop.}
\Crefname{theorem}{Thm.}{Thm.}
\Crefname{lemma}{Lemma}{Lemmas}
\Crefname{corollary}{Cor.}{Cor.}
\Crefname{assumption}{Assumption}{Assumptions}
\Crefname{appendix}{Appendix}{Appendix}
\Crefname{equation}{Eq.}{Eq.}

\title{Abstraction-Based Proof Production in \\Formal Verification of Neural Networks}
\titlerunning{Abtract Proof Production}
\author{
    Yizhak Yisrael Elboher\inst{1}\orcidID{0000-0003-2309-3505}\Envelope, Omri Isac\inst{1}\orcidID{0009-0000-6505-6900}, Guy Katz\inst{1}\orcidID{0000-0001-5292-801X}, Tobias Ladner\inst{2}\orcidID{0000-0002-4556-8308}, Haoze Wu\inst{3}\orcidID{0000-0002-5077-144X}
}
\authorrunning{YY. Elboher, O. Isac, G. Katz, T. Ladner, H. Wu}
\institute{ 
The Hebrew University of Jerusalem, Israel \and Technical University of Munich, Germany \and Amherst College, USA
}

\begin{document}

\maketitle

\begin{abstract}
Modern verification tools for deep neural networks (DNNs) increasingly rely on abstraction to scale to realistic architectures. In parallel, proof production is becoming a critical requirement for increasing the reliability of DNN verification results. However, current proof-producing verifiers do not support abstraction-based reasoning, creating a gap between scalability and provable guarantees.
We address this gap by introducing a novel framework for proof-producing abstraction-based DNN verification. Our approach modularly separates the verification task into two components: (i) proving the correctness of an abstract network, and (ii) proving the soundness of the abstraction with respect to the original DNN. 
The former can be handled by existing proof-producing verifiers, whereas we propose the first method for generating formal proofs for the latter.
This preliminary work aims to enable scalable and trustworthy verification by supporting common abstraction techniques within a formal proof framework. 
\keywords{Neural Networks, Formal Verification, Proof Production, Abstraction}
\end{abstract}

\section{Introduction}\label{sec:introduction}

Deep Neural Networks (DNNs) \cite{LeBeHi15, GoBeCo16} have demonstrated exceptional performance in various domains, including vision~\cite{KrSuHi12}, language~\cite{VaShPaUsJoGoKaPo17}, audio\cite{OoDiZeSiViGrKaSeKa16} and video\cite{ArDeHeSuLuSc21} analysis, achieving state-of-the-art accuracy in complex tasks \cite{ReKiXuBrMcSu23, RaWoHaRaGoAgSaAsMiClKrSu21}.
However, despite their success, DNNs function as black-box models, making their decision-making processes difficult to interpret and trust \cite{Ru19, Li18}.

DNN verification \cite{Eh17, KaBaDiJuKo17, LiArLaStBaKo21} provides formal methods and tools (\emph{verifiers}), to ensure or refute that DNNs comply with required specifications, offering formal guarantees of correctness. However, although verification algorithms are theoretically sound, their implementation occasionally introduce bugs and vulnerabilities~\cite{ZoBaCsIsJe21, JiRi21, ElKa23}, compromising their soundness and undermining the confidence in the verifier. 

A notable approach to tackle these issues is by producing \emph{formal proofs}, i.e., mathematical objects that can be checked by an independent program and witness the verifier's correctness. Proof production was explored in SMT and SAT solvers\cite{GrRoTo21,BaDeFo15}, and recently also in DNN verification~\cite{IsBaZhKa22,SiSaMeSi25}.
Although proofs enhance the reliability of the verification process, their generation limits the scalability of the verifier in two ways: \begin{inparaenum}[(a)]
    \item the generated proofs tend to be large, which
      substantially increases the verifier's memory consumption; and
    \item some verifier optimizations are not supported by the proof
      mechanism, and are disabled whenever proof generation is used
      --- slowing down the verifier. 
\end{inparaenum}

Scalability is a key challenge for DNN verification, which is an NP-complete problem~\cite{SaLa21} in simple cases, and modern solvers might solve verification queries in worst-case exponential time with respect to DNN size (number of neurons)\cite{BrMuBaJoLi23}. 
A common attempt to overcome this obstacle is to apply
\emph{abstraction}. This well-established technique in formal
verification\cite{CoCo77,ClGrLo94,ClGrJhLuVe00} is used to manage the
complexity of analyzing large systems by creating a simpler, abstract
model that retains the essential properties of the original system. In
the context of DNNs, abstraction has gained attention as a method to
enhance the scalability and efficiency of  verification\cite{ElGoKa22,AsHaKrMo20,LaAl23}. Specifically, DNN abstraction involves the construction of a reduced or approximate representation of the network such that the verification of the abstract network provides meaningful guarantees for the original network. 
Using abstraction, verification tools can handle larger networks and
more complex properties, making it a promising approach for scalable
and efficient formal analysis of DNNs.

This work-in-progress addresses two key challenges in DNN verification: enabling
proof production for abstraction-based solvers and generating more
compact proofs.  While abstraction improves scalability by simplifying
the network, existing proof-producing tools do not support it. To
bridge this gap, we propose the notion of an \emph{abstract
  proof} --- a modular proof consisting of (i) a proof that the required specification holds in the abstract network,
and (ii) a proof that the abstraction over-approximates
the original network, which means that if the property holds for the abstract network, it is guaranteed to hold for the original network as well.
  
Therefore, our approach extends proof support to scalable abstraction-based
solvers, while at the same time reducing proof size; since the abstract networks are typically smaller than the original DNNs (i.e., contain fewer neurons) and their verification time is faster, it is expected that size of the proof will be considerably reduced as well.
\Cref{fig:flowchart} illustrates the improved proof workflow and expected efficiency gains compared to the standard approach.
\begin{figure}
\centering
\begin{tikzpicture}[
  node distance=1cm and 0.75cm,
  box/.style={diamond, fill=blue!10, thick, minimum width=2cm, minimum height=1cm, align=center,
  rounded corners=2pt
  },
  operator/.style={rectangle, , fill=orange!40, thick, minimum width=1cm, minimum height=1cm, align=center,
  rounded corners=2pt
  },
  proofpart/.style={rectangle,, fill=yellow!40, thick, minimum width=1cm, minimum height=1cm, align=center,
  rounded corners=2pt
  },
  checkproof/.style={rectangle, fill=green!30, thick, minimum width=1cm, minimum height=1cm, align=center,
  rounded corners=2pt
  },
  standardproofpart/.style={rectangle, fill=yellow!20, thick, minimum width=1cm, minimum height=1cm, align=center,
  rounded corners=2pt
  },
  standardcheckproof/.style={rectangle, fill=green!10, thick, minimum width=1cm, minimum height=1cm, align=center,
  rounded corners=2pt
  }
  ]

  \node[box] (query) {Input\\Query};
  \node[operator, right=of query] (abstract) {Abstract};
  \node[proofpart, right=of abstract, yshift=0.5cm] (solve) {Verify \& \\ Prove};
  \node[proofpart, below=of solve, yshift=0.75cm]  (correctness) {Proof of\\Abstraction};
  \node[checkproof, right=2.9cm of abstract] (checkproof) {Check\\Proofs};

  \node[standardproofpart, left=of query] (standardsolve) {Verify \& \\ Prove};
  \node[standardcheckproof, left=of standardsolve] (standardcheckproof) {Check\\Proof};

  \draw[->] (query) -- node[midway, above, yshift=0.75cm, xshift=1cm]{\textbf{ours}}(abstract);
  \draw[->] (abstract) -- (solve);
  \draw[->] (abstract) -- (correctness);
  \draw[->] (solve) -- (checkproof);
  \draw[->] (correctness) -- (checkproof);
  \draw[->] (standardsolve) -- (standardcheckproof);
  \draw[->] (query) -- node[above, yshift=0.75cm, xshift=-1cm]{\textbf{standard}} (standardsolve);
  
\end{tikzpicture}
\caption{Proof production flowchart: standard (left) versus ours (right). Bold colors represent cheaper operations.} 
  \label{fig:flowchart}
\end{figure}
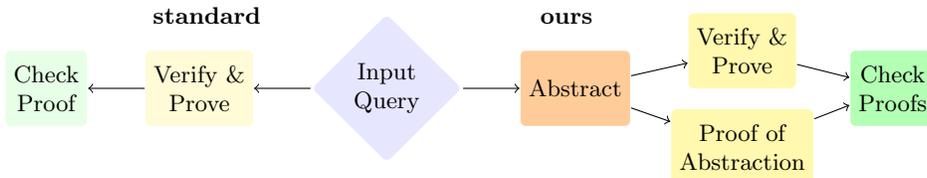

We regard this work as an attempt to lay a solid foundation, to be followed in the future by an  implementation and evaluation.

Inspired by CEGAR~\cite{ClGrJhLuVe00}, our main contributions are:
\begin{enumerate}
    \item We introduce the concept of an abstract proof for DNN verification.
    \item We design an abstraction-refinement mechanism for proof production.
    \item We formalize a verifiable proof of the abstraction process itself, using the Marabou DNN verifier~\cite{WuIsZeTaDaKoReAmJuBaHuLaWuZhKoKaBa24} and the CORA abstraction engine~\cite{Al15}.
\end{enumerate}

The paper is organized as follows:
\Cref{sec:preliminaries} provides background on DNNs, DNN verification and abstraction, and proof production. 
\Cref{sec:method} introduces our modular framework for constructing abstract proofs, describes the challenge of aligning proofs between the original and abstract networks, and presents a general abstraction-refinement algorithm for efficient proof production.
\Cref{sec:cora_abstraction} explains the abstraction process in CORA, and adapts it to our formulation.
\Cref{sec:implementation} details our implementation using Marabou (for proof production) and CORA (for abstraction), including how to verify abstract networks and generate corresponding proofs.
\Cref{sec:related-work} reviews relevant literature, and \Cref{sec:conclusion} summarizes our contributions and outlines future work.

\section{Preliminaries}\label{sec:preliminaries}

\subsection{Deep Neural Networks (DNNs)}
A \textit{Deep Neural Network (DNN)} is a parameterized function \( \nn\colon \R^{\numNeurons_0} \to \R^{\numNeurons_\numLayers}  \), composed of multiple layers of interconnected neurons. 
Each layer performs an affine transformation followed by a nonlinear activation function.
Formally, given an input $\x\in\R^{\numNeurons_0}$, the output $\y=\nn(\x)$ is computed as follows:
\begin{align}
\label{eq:dnn}
    \h_0 &= \x, & \h_{k} &= \nnActFun_{k}(\nnW_{k} \h_{k-1} + \nnb_{k}), & \y &= \h_\numLayers, & k\in[\numLayers].
\end{align}
where \( \nnW_{k}\in\R^{\numNeurons_{k}\times \numNeurons_{k-1}} \) is the weight matrix, \( \nnb_{k}\in\R^{\numNeurons_k} \) is the bias
vector, and \( \nnActFun_{k} \) is the nonlinear activation function
(e.g., ReLU\cite{NaHi10} or sigmoid) for the \( k \)-th layer. An
illustration of a neural network $\examplenn$ appears in \cref{fig:running-example}.

\thispagestyle{empty}
\definecolor{lightblue}{RGB}{100, 150, 200}
        
    \tikzset{
        every label/.style={font=\footnotesize, fill=white, inner sep=1pt, outer sep=0pt},
        neuron/.style={circle, fill=lightblue, draw=lightblue, thick, inner sep=0.08cm},
        weight/.style={fill=white, font=\tiny, inner sep=1pt, outer sep=0pt}
    }

\begin{figure}[ht]
  \centering
    \begin{tikzpicture}[scale=1]
        \footnotesize
    
        \newcommand{\inputx}{0}
        \newcommand{\hiddenx}{2.5}
        \newcommand{\outputx}{4.5}
        \newcommand{\hya}{4.0}
        \newcommand{\hyb}{3.0}
        \newcommand{\hyc}{2.0}
        \newcommand{\hyd}{1.0}
        \newcommand{\hye}{0.0}
        \newcommand{\inya}{4.0}
        \newcommand{\inyb}{2.66}
        \newcommand{\inyc}{1.33}
        \newcommand{\inyd}{0.0}
        \newcommand{\outya}{2.0}
    
        \foreach \i/\y in {1/\inya, 2/\inyb, 3/\inyc, 4/\inyd} {
            \foreach \j/\yy in {1/\hya, 2/\hyb, 3/\hyc, 4/\hyd, 5/\hye} {
                \draw (\inputx,\y) -- (\hiddenx,\yy);
            }
        }
        \foreach \i/\y in {1/\inya, 2/\inyb, 3/\inyc, 4/\inyd} {
            \foreach \j/\yy in {1/\hya, 2/\hyb, 3/\hyc, 4/\hyd, 5/\hye} {
                \pgfmathsetmacro\w{{
                {-1, -1, -2, 0},
                {-1, -2, -1, -3},
                {0.1, 0, 0, 0},
                {0, 0.2, 0, 0},
                {1, 1, 1, -1.3}
                }[\j-1][\i-1]}
                \pgfmathsetmacro\pos{0.2 + 0.6*(mod(\i,2))}
                \draw[draw=none] (\inputx,\y) -- (\hiddenx,\yy)
                node[weight,pos=\pos] {\pgfmathprintnumber[precision=2]{\w}};
            }
        }
    
        \foreach \i/\y in {1/\hya, 2/\hyb, 3/\hyc, 4/\hyd, 5/\hye} {
            \foreach \j/\yy in {1/\outya} {
                \pgfmathsetmacro\w{{{1,1,-5,-5,1}}[\j-1][\i-1]}
                \pgfmathsetmacro\pos{0.3 + 0*(mod(\i,2))}
                \draw (\hiddenx,\y) -- (\outputx,\yy)
                node[weight,pos=\pos] {\pgfmathprintnumber[precision=2]{\w}};
            }
        }
    
        \node[neuron,label={$1$}] at (\inputx,\inya) {};
        \node[neuron,label={$1$}] at (\inputx,\inyb) {};
        \node[neuron,label={$1$}] at (\inputx,\inyc) {};
        \node[neuron,label={$1$}] at (\inputx,\inyd) {};
    
        \node[neuron,label={\tiny{b=0}}] at (\hiddenx,\hya) {};
        \node[neuron,label={\tiny{b=0}}] at (\hiddenx,\hyb) {};
        \node[neuron,label={\tiny{b=0}}] at (\hiddenx,\hyc) {};
        \node[neuron,label={\tiny{b=0}}] at (\hiddenx,\hyd) {};
        \node[neuron,label={\tiny{b=0}}] at (\hiddenx,\hye) {};
        
        \node[neuron,label={$0.2$}] at (\outputx,\outya) {};
    
        \node at (\inputx,\hya+0.8) {\small Input};
        \node at (2.5,\hya+0.8) {\small Hidden};
        \node at (\outputx,\hya+0.8) {\small Output};
    
    \end{tikzpicture}
  \caption{A neural network $\examplenn$ for which $\examplenn(1,1,1,1) = 0.2$. All biases are 0.}
  \label{fig:running-example}
\end{figure}  


\subsection{DNN Verification}
A \textit{verification query} is a triplet $\dnnvquery{f}{P}{Q}$ where \( \nn\colon \R^{\numNeurons_0} \to \R^{\numNeurons_\numLayers} \) is a DNN, $\nnInputSet\subset\R^{\numNeurons_0}$ is an input property and $\mathcal{Q}\subset\R^{\numNeurons_\numLayers}$ is an output property. \textit{DNN Verification} aims to solve verification queries by deciding whether there exists an input satisfying $\mathcal{P}$ for which its output of $\nn$ satisfies $\mathcal{Q}$:
\begin{equation}
    \label{eq:dnn-verification}
    \exists \x. \ \x \in\nnInputSet \wedge \nn(\x)\in \mathcal{Q}.
  \end{equation}
Typically, \( \mathcal{Q}\) is a set that characterizes an undesired behavior, such as vulnerability of \(\nn\) to adversarial perturbations or danger conditions. If an input \(\x \in  \mathcal{P} \) is found whose output \(\nn(\x) \in  \mathcal{Q} \), we say the query is \sat{}, and $\x$ serves as a counterexample to the desired property. Otherwise, if no such input exists, we say the query is \unsat{}, and thus the desired property is valid. 
To ease notation, we also denote the latter case as $\certpred(\dnnvquery{\nn}{P}{Q})$. This can be captured as follows:
\begin{equation}
    \label{eq:unsat-dnn_vericiation}
    \certpred(\dnnvquery{\nn}{\nnInputSet}{Q}) \equiv \forall \x. \ \x \in\nnInputSet \implies \nn(\x) \notin \mathcal{Q}.
\end{equation}
For simplicity, we assume for this work that $\mathcal{P}$ is a
hyperrectangle, although our approach can be generalized to any closed
set $\nnInputSet$, e.g., by over-approximating  $\mathcal{P}$ with a bounding box thereof. 
An example of a verification query is $\langle\nn_1,\mathcal{P}_1,\mathcal{Q}_1\rangle$ where $\nn_1$ is the network in \cref{fig:running-example}, $\mathcal{P}_1=\{(1\pm\epsilon,1\pm\epsilon,1\pm\epsilon,1) \mid  \epsilon\in[0, 0.1]\}$ and $\mathcal{Q}_1=\mathbb{R}_{\le0}$.

\subsection{DNN Abstraction for Formal Verification}
\label{sec:abs}
To accelerate DNN verification, it is often beneficial to reduce the size of the network through abstraction. However, since the verification target is the original DNN, one must ensure that any conclusions drawn from the abstract (simplified) model also apply to the original. The over-approximation requirement is represented as follows, where $\nn,\redNN$ are the original and abstract networks, respectively:
\[
\certpred(\dnnvquery{\redNN}{\nnInputSet}{Q}) \implies \certpred(\dnnvquery{\nn}{\nnInputSet}{Q}).
\]
If the abstraction is too coarse and yields a \sat{} result with a spurious example, i.e. one that is not a counter example in the original model, the
abstract model is iteratively \textit{refined}
--- made into a more
precise, albeit a larger, over-approximation --- until the verification query can be resolved correctly.

Our framework is designed to be compatible with a wide range of abstraction methods. In this work, we focus on CORA~\cite{Al15}, a MATLAB toolbox used for formal verification of neural networks via reachability analysis. With its recent abstraction-refinement extension~\cite{LaAl23}, CORA improves performance by replacing the original network with a smaller abstract model that is iteratively refined as needed. We leverage CORA as a backend for abstraction in our framework. Other approaches to network abstraction are discussed in \cref{sec:related-work}. In some abstraction methods, including the one used in CORA, the abstract network does not follow a standard neural network structure. \Cref{sec:fullproof} discusses this gap in more detail.

\subsection{Proof Production for DNN Verification} \label{sec:proof-production}
As a satisfiability problem, proving that a  DNN verification query is
 \sat{} is straightforward, using a satisfying assignment that could be checked by evaluation over the network.  
Proving \unsat{}, however, is more complicated due to the NP-hardness of DNN verification~\cite{KaBaDiJuKo21, SaLa21}. Thus, bookkeeping the whole proof may require large memory consumption, even for small DNNs. 

In this work, we focus on the proof producing version of Marabou~\cite{WuIsZeTaDaKoReAmJuBaHuLaWuZhKoKaBa24, IsBaZhKa22}, a state-of-the-art DNN verifier, which encodes verification queries as satisfiability problems, utilizing satisfiability modulo theories (SMT) solving and linear programming (LP) to analyze properties of interest. It handles nonlinear activation functions, such as ReLU, through case-splitting and relaxation techniques. 
Marabou's proof of \unsat{} is represented by a \emph{proof-tree}.  By
construction, the proof's size
heavily depends on the number of splits performed by Marabou, which
could be exponential in the number of neurons.

\section{Method}\label{sec:method}
We intend to accelerate verification by applying abstraction to reduce the DNN size and prove the property over the abstract DNN. However, a proof over the abstract network alone is insufficient --- it does not guarantee that the property holds for the original network. To overcome this, we introduce a general framework for constructing end-to-end proofs that remain sound while leveraging abstraction.

\subsection{Proving Abstraction-Based DNN Verification}
\label{sec:fullproof}
The verification of DNNs using abstraction consists of two main
components: constructing the abstraction and verifying the abstract
network. To match this structure, our proof method for \unsat{} cases follows the same modular approach. This modularity ensures that our method remains agnostic to the underlying DNN verifier and abstraction technique, making it broadly applicable. Specifically, it enables combining any DNN verification tool capable of producing proofs with any abstraction method that comes with a corresponding proof rule.  

Our method constructs a proof that consists of
two independent parts. First, given a candidate DNN $\nn$, an abstract netrowk  $\redNN$ and
properties $\mathcal{P}, \mathcal{Q}$, we establish that if
$\dnnvquery{\redNN}{P}{Q}$ is \unsat{}, then so is
$\dnnvquery{\nn}{P}{Q}$. This forms the \emph{proof of over-approximation}, i.e.,
proof of abstraction correctness, which ensures that verification
results transfer from the abstract network to the original one. 
The second part is the verification proof for the abstract network, i.e., the proof that $\dnnvquery{\redNN}{P}{Q}$ is indeed \unsat{}. These two proofs, when combined, yield the \emph{abstract proof} following the proof rule in~\Cref{fig:absproofrule}. Even though this rule is a private case of implication elimination (modus ponens), we define it to clearly indicate our modular approach. 

\begin{figure*}[h!]
    \centering
    \rulename{abs-proof:}
    \drule{
        \certpred(\dnnvquery{\redNN}{\nnInputSet}{Q}) \implies \certpred(\dnnvquery{\nn}{\nnInputSet}{Q})
        \ \ \ \
         \certpred(\dnnvquery{\redNN}{P}{Q})
        }
        {
        \certpred(\dnnvquery{\nn}{P}{Q})
        }

        \caption{Proof rule for proving DNN verification with abstraction.}
    \label{fig:absproofrule}
\end{figure*}

As a proof system for verifying DNNs (i.e., the top right part of the
rule) has been introduced in prior work~\cite{IsBaZhKa22}, we focus
on constructing the proof of over-approximation (i.e., the top left part). 
We exemplify this in~\Cref{sec:coraproofrule}. Also note that in the cases where $\redNN$ is not precisely a DNN, we should also formalize $\dnnvquery{\redNN}{P}{Q}$ and its unsatisfiability, as part of the abstraction's definition. Furthermore, we are required to show how  $\dnnvquery{\redNN}{P}{Q}$ can be reduced to a DNN verification query.  We do so for CORA  in~\Cref{sec:cora_abstraction} and in ~\Cref{sec:apply_marabou_on_abstraction}, respectively.

\subsection{Main Algorithm}
We propose \cref{alg:verify-then-prove-once} to improve proof production via abstraction-refinement. The algorithm begins (line 1) by generating an abstract version of the original network, then iteratively verifies the correctness of the desired property on the abstract network.

Unless the condition in line 6 is met, only verification is performed; proof production is attempted only after an \unsat{} result has been established. This avoids redundant proof attempts and improves performance in each iteration. An additional advantage is modularity: any verifier can be used to check the property, not just those with proof production capabilities or specific configurations that support it.

The procedures $\proveabst$ and $\provequery$ represent the generation of a proof for the over-approximation and for the abstract query, respectively, and are described in more detail in \cref{sec:implementation}. The pair $\langle p_a, p_q \rangle$ denotes the concatenation of the two components into a full proof.

\begin{algorithm}
\textbf{Input:} $\nn$, $\mathcal{P}$, $\mathcal{Q}$.
\textbf{Output:} proof that $\certpred(\dnnvquery{\nn}{\nnInputSet}{Q})$, or counterexample.
\caption{Proof Production with Abstraction}\label{alg:verify-then-prove-once}
\begin{algorithmic}[1]
        \STATE {$\redNN$ = $\abst(\nn, \mathcal{P})$}
        \WHILE{ true }
            \STATE {result, example = $\verify(\redNN$,$\mathcal{P}, \mathcal{Q}$)}
            \IF{result == \sat{} and example is not spurious}
                \RETURN {result, example}
            \ELSIF{result == \unsat{}}
                \STATE{$p_a$ = $\proveabst$($\redNN$,$\nn,\nnInputSet,\mathcal{Q}$)}
                \STATE{$p_q$ = $\provequery$($\redNN$,$\mathcal{P}$,$\mathcal{Q}$)}
                \IF{$p_a$ and $p_q$ were successfully generated}
                \RETURN{\unsat{}, $\langle p_a, p_q \rangle$}
            \ENDIF
            \ENDIF
            \STATE {$\redNN$ = $\refine(\redNN,\nn)$} 
        \ENDWHILE
\end{algorithmic}
\end{algorithm}


\section{Abstraction in CORA}\label{sec:cora_abstraction}
We provide an overview on how abstraction in CORA works and how it can be integrated into the
verification process. We refer the reader to \cite{Al15,LaAl23} for additional details.

 Given a neural network $\nn$ as in~\eqref{eq:dnn} and an input set $\nnInputSet$, the exact output set $\nnOutputSetExact=\nn(\nnInputSet)$ is computed by
 \begin{align}
     \label{eq:dnn-set-propagation-exact}
     \nnHiddenSetExact_0 &= \nnInputSet, & \nnHiddenSetExact_{k} &= \nnActFun_{k}(\nnW_{k} \nnHiddenSetExact_{k-1} + \nnb_{k}), & \nnOutputSetExact &= \nnHiddenSetExact_\numLayers, & k\in[\numLayers].
 \end{align}
 These exact sets are generally expensive to compute~\cite{KaBaDiJuKo17}.
 Thus, we over approximate the output of each layer $\nnHiddenSet_k \supseteq \nnHiddenSet^*_k$.
 In this work, we only consider the set $\nnHiddenSet_k$ to be
 represented as hyperractangles,
 although more sophisticated set representations exist~\cite{GeMiDrTsChVe18,bak2021nnenum,lopez2023nnv,kochdumper2022open,ladner2023automatic}.

Since DNNs usually contain a large number of neurons per layer, their  verification can be computationally expensive as well.
Thus, \cite{LaAl23} suggests a construction of an abstract network that soundly merges neurons with similar bounds to reduce the network size, 
which in turn decreases the verification time by decreasing the computation time.
The bounds are determined by a one-step look-ahead algorithm using
interval bound propagation (IBP)~\cite{GoDvStBuQiUeArMaKo19}.
In particular, we compute the output interval bounds of layer $k$ as follows \cite[Alg.~2]{LaAl23}:
\begin{equation}
    \label{eq:bounds-computation}
    \bounds_k = \nnActFun_k(\nnW_{k}\cdot\interval(\nnHiddenSet_{k-1})) + \nnb_{k}) \supseteq \nnHiddenSet_k,
\end{equation}
where $\interval(\cdot)$ computes the interval bounds of the given set $\nnHiddenSet_{k-1}$.
In order to preserve soundness, 
multiple neurons with similar bounds are merged and the resulting error is bounded by adapting the bias term in the next layer, 
converting them from scalars into intervals. 
These bias intervals bound the deviation between the abstract network and the original network of each layer in the network and, thus, also the output of both networks.

More formally, given a neural network, 
\cite[Prop.~4]{LaAl23} defines a way to merge the neurons in the $k$-th layer, 
constructing the weights and biases such that the output of the $k+1$-th layer of the original neural network is contained in the output of the $k+1$-th layer of the abstract network.
Notice that the terminology in \cite{LaAl23} splits each layer into
two layers, namely the linear layer and the nonlinear layer, and indexes them separately. 
Here, we similarly treat each layer as having two parts, but do not handle these as different layers.

\begin{proposition}[{Neuron Merging \cite[Prop.~4]{LaAl23}}]
    \label{prop:neuron-merging}
    Given a nonlinear hidden layer $k\in [\numLayers-1]$ of a network $\nn$ with $\numNeurons_k$ neurons, 
    output interval bounds $\bounds_k\supseteq \nnHiddenSet^*_k$, a
    merge bucket $\bucket\subset[\numNeurons_k]$ containing the
    indices of the merged neurons, and
    $\bar{\bucket}=[\numNeurons_k]\setminus\bucket$, we can construct
    an abstract network $\redNN$,
    where we remove the merged neurons by adjusting the layers $k$ and $k + 1$ as follows:
    \begin{align*}
        \rednnW_{k} &= \nnW_{k(\bar{\bucket},\cdot)}, &
        \rednnb_{k} &= \nnb_{k(\bar{\bucket})}, &
        \redBounds_{k} &= \redBounds_{k(\bar{\bucket})}, \\
        \rednnW_{k+1} &= \nnW_{k+1(\cdot, \bar{\bucket})}, & 
        \rednnb_{k+1} &= \nnb_{k+1}, &
        \redBounds_{k+1} &= \nnW_{k+1(\cdot,\bucket)}\bounds_{k(\bucket)}.
    \end{align*}
\end{proposition}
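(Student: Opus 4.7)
The plan is to fix an arbitrary input $x \in \nnInputSet$ and show that the layer-$(k+1)$ preactivation (and hence activation) of $\nn$ on $x$ lies inside the set produced by the corresponding abstract layer of $\redNN$ on $x$, where the abstract layer is evaluated by taking the Minkowski sum of the ordinary affine image with the interval $\redBounds_{k+1}$ before applying $\nnActFun_{k+1}$. Since layers $1,\dots,k-1$ are untouched by the merge, I may identify $\h_{k-1}(x)$ with its abstract counterpart, which localizes the argument entirely to layers $k$ and $k+1$; propagation through the remaining layers then preserves the inclusion via the set-valued recurrence~\eqref{eq:dnn-set-propagation-exact}.

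The core of the argument is a two-step computation. First I would analyze layer $k$: on the retained coordinates $\bar{\bucket}$ the construction sets $\rednnW_k = \nnW_{k(\bar{\bucket},\cdot)}$ and $\rednnb_k = \nnb_{k(\bar{\bucket})}$, so after applying $\nnActFun_k$ coordinatewise the abstract activations equal the original activations restricted to $\bar{\bucket}$; on the discarded coordinates $\bucket$ the hypothesis $\bounds_k \supseteq \nnHiddenSetExact_k$ guarantees $\h_k(x)_{\bucket} \in \bounds_{k(\bucket)}$. Second, I would split the original layer-$(k+1)$ preactivation column-wise:
\begin{equation*}
\nnW_{k+1}\,\h_k(x) + \nnb_{k+1} \;=\; \nnW_{k+1(\cdot,\bar{\bucket})}\,\h_k(x)_{\bar{\bucket}} \;+\; \nnW_{k+1(\cdot,\bucket)}\,\h_k(x)_{\bucket} \;+\; \nnb_{k+1},
\end{equation*}
observe that the first summand coincides with $\rednnW_{k+1}\,\redVar{\h}_k(x)$ by the previous step, and conclude from linearity of the image of a set under a linear map that the second summand lies in $\nnW_{k+1(\cdot,\bucket)}\,\bounds_{k(\bucket)} = \redBounds_{k+1}$. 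Hence the original preactivation belongs to $\rednnW_{k+1}\,\redVar{\h}_k(x) + \rednnb_{k+1} + \redBounds_{k+1}$, and applying $\nnActFun_{k+1}$ coordinatewise preserves this inclusion.

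The main obstacle, rather than any calculation, is committing to a precise set-valued semantics for the abstract network in which the scalar bias $\rednnb_{k+1}$ and the interval correction $\redBounds_{k+1}$ are combined by Minkowski sum before the nonlinearity. Once this semantics is fixed --- consistently with the IBP update~\eqref{eq:bounds-computation} used to derive $\bounds_k$ in the first place --- the argument collapses to the elementary containment $\nnW_{k+1(\cdot,\bucket)}\,\h_k(x)_{\bucket} \in \nnW_{k+1(\cdot,\bucket)}\,\bounds_{k(\bucket)}$, together with monotonicity of pointwise activations on set inclusion. No structural assumption on $\nnActFun$ beyond coordinatewise application is needed, which is why the same proof template would apply uniformly across the activation families handled by CORA.
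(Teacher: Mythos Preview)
The paper does not supply its own proof of \cref{prop:neuron-merging}: the proposition is quoted from \cite[Prop.~4]{LaAl23} and functions here as a \emph{construction} rather than a theorem with a stated conclusion. The soundness claim you set out to prove --- that the original layer-$(k{+}1)$ output is contained in the abstract one --- is in this paper the content of \cref{lem:abstraction-correctnes-conditions}, and the brief argument given there matches yours exactly: the retained neurons in $\bar{\bucket}$ are computed identically, while the contribution of the merged neurons in $\bucket$ is over-approximated by $\redBounds_{k+1}=\nnW_{k+1(\cdot,\bucket)}\bounds_{k(\bucket)}$ because $\h_k(x)_{\bucket}\in\bounds_{k(\bucket)}$.

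Your column-wise split of the layer-$(k{+}1)$ preactivation and the appeal to monotonicity of coordinatewise activations are correct and make the containment explicit where the paper's proof of \cref{lem:abstraction-correctnes-conditions} only gestures at it (``computed over-approximative using IBP''). Your closing remark about fixing the set-valued semantics is also well placed: the paper does this in~\eqref{eq:rednn}, where $\rednnb_{k+1}\oplus\redBounds_{k+1}$ is taken as a Minkowski sum before $\nnActFun_{k+1}$, exactly the convention you anticipated. In short, there is nothing in the paper to compare against beyond the sketch inside \cref{lem:abstraction-correctnes-conditions}, and your argument is a faithful and slightly more detailed version of that sketch.
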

$\newline$
where for $\mathcal{S}\in\{\bucket,\bar{\bucket}\}$, $\square_{(\mathcal{S},\cdot)}$ and $\square_{(\cdot, \mathcal{S})}$ represent the rows and columns with the indices in $\mathcal{S}$ in lexicographic order, respectively.
The interval bounds $\redBounds_k$ require us to extend the formulation for a neural network~$\nn$ as in~\eqref{eq:dnn} to an abstract network $\redNN$. \cref{fig:basic_abstract_network} illustrates this extension for the neural network $\examplenn$.
\thispagestyle{empty}
\definecolor{lightblue}{RGB}{100, 150, 200}        
    \tikzset{
        every label/.style={font=\footnotesize, fill=white, inner sep=1pt, outer sep=0pt},
        neuron/.style={circle, fill=lightblue, draw=lightblue, thick, inner sep=0.08cm},
        weight/.style={fill=white, font=\tiny, inner sep=1pt, outer sep=0pt}
    }
\begin{figure}[ht]
  \centering
    \begin{tikzpicture}
        \newcommand{\inputx}{0}
        \newcommand{\hiddenx}{2.5}
        \newcommand{\outputx}{4.5}
        \newcommand{\hya}{4.0}
        \newcommand{\hyb}{3.0}
        \newcommand{\hyc}{2.0}
        \newcommand{\hyd}{1.0}
        \newcommand{\hye}{0.0}
        \newcommand{\inya}{4.0}
        \newcommand{\inyb}{2.66}
        \newcommand{\inyc}{1.33}
        \newcommand{\inyd}{0.0}
        \newcommand{\outya}{2.0}
    
        \foreach \i/\y in {1/\inya, 2/\inyb, 3/\inyc, 4/\inyd} {
            \foreach \j/\yy in {1/\hya, 2/\hyb, 3/\hyc, 4/\hyd, 5/\hye} {
                \draw (\inputx,\y) -- (\hiddenx,\yy);
            }
        }
        \foreach \i/\y in {1/\inya, 2/\inyb, 3/\inyc, 4/\inyd} {
            \foreach \j/\yy in {1/\hya, 2/\hyb, 3/\hyc, 4/\hyd, 5/\hye} {
                \pgfmathsetmacro\w{{
                {-1, -1, -2, 0},
                {-1, -2, -1, -3},
                {0.1, 0, 0, 0},
                {0, 0.2, 0, 0},
                {1, 1, 1, -1.3}
                }[\j-1][\i-1]}
                \pgfmathsetmacro\pos{0.2 + 0.6*(mod(\i,2))}
                \draw[draw=none] (\inputx,\y) -- (\hiddenx,\yy)
                node[weight,pos=\pos] {\pgfmathprintnumber[precision=2]{\w}};
            }
        }
    
        \foreach \i/\y in {1/\hya, 2/\hyb, 3/\hyc, 4/\hyd, 5/\hye} {
            \foreach \j/\yy in {1/\outya} {
                \pgfmathsetmacro\w{{{1,1,-5,-5,1}}[\j-1][\i-1]}
                \pgfmathsetmacro\pos{0.3 + 0*(mod(\i,2))}
                \draw (\hiddenx,\y) -- (\outputx,\yy)
                node[weight,pos=\pos] {\pgfmathprintnumber[precision=2]{\w}};
            }
        }
    
        \node[neuron] at (\inputx,\inya) {};
        \node[neuron] at (\inputx,\inyb) {};
        \node[neuron] at (\inputx,\inyc) {};
        \node[neuron] at (\inputx,\inyd) {};
    
        \node[neuron,label={\tiny{$[0,0]$}}] at (\hiddenx,\hya) {};
        \node[neuron,label={\tiny{$[0,0]$}}] at (\hiddenx,\hyb) {};
        \node[neuron,label={\tiny{$[0,0]$}}] at (\hiddenx,\hyc) {};
        \node[neuron,label={\tiny{$[0,0]$}}] at (\hiddenx,\hyd) {};
        \node[neuron,label={\tiny{$[0,0]$}}] at (\hiddenx,\hye) {};
        
        \node[neuron] at (\outputx,\outya) {};
    
        \node at (\inputx,\hya+0.8) {\small Input};
        \node at (2.5,\hya+0.8) {\small Hidden};
        \node at (\outputx,\hya+0.8) {\small Output};
    
    \end{tikzpicture}
    \caption{$\examplerednn_0$, the extension of $\examplenn$ with the new formulation for abstract networks, where the biases of $\examplenn$ (zeros) are converted to intervals (singletones).}
    \label{fig:basic_abstract_network}
\end{figure}

Given an input $\x\in\nnInputSet$, the output $\nnOutputSetRed=\redNN(\x)$ is computed by
\begin{align}
\label{eq:rednn}
    \nnHiddenSetRed_0 &= \{\x\}, & \nnHiddenSetRed_k &= \nnActFun_k(\rednnW_k \nnHiddenSetRed_{k-1} \oplus \rednnb_k \oplus \redBounds_k), & \nnOutputSetRed &= \nnHiddenSetRed_\numLayers, & k\in[\numLayers],
\end{align}
where all $\redBounds_k$ are initialized with $\{\mathbf{0}\}$, or equivalently $[\mathbf{0},\mathbf{0}]$,
 and $\oplus$ denotes the Minkowski sum of two sets, i.e., given $\mathcal{S}_1,\,\mathcal{S}_2\subset\R^n$, $\mathcal{S}_1\oplus\mathcal{S}_2=\{s_1+s_2\ |\ s_1\in\mathcal{S}_1,\, s_2\in\mathcal{S}_1\}$.
If either summand of the Minkowski sum is given as a vector, it is implicitly converted to a singleton.
The interval biases $\rednnb_k\oplus\redBounds_k$ capture the error between the abstract network and the original network.
Thus, initially it holds that:
\begin{align}
\label{eq:corabasecase}
    \forall \x\in\nnInputSet\colon \nn(\x) \in \redNN(\x) = \{\nn(\x)\}.
\end{align}
As an abstract network outputs a set instead of a single vector, we also have to generalize~\eqref{eq:unsat-dnn_vericiation} to abstract networks:
\begin{equation}
    \label{eq:unsat-rednn_vericiation}
    \certpred(\dnnvquery{\redNN}{\nnInputSet}{Q}) \equiv \forall \x. \ \x \in\nnInputSet \implies \redNN(\x) \cap \mathcal{Q} = \emptyset.
\end{equation}
This formulation enables us the following corollary:
\begin{corollary}
\label{lem:abstraction-correctnes-conditions}
Given an input set $\nnInputSet$,
an abstract neural network $\redNN$ where \cref{prop:neuron-merging} is applied to all layers $k' \leq k\in[\numLayers]$,
we can merge the neurons in layer $k$ using \cref{prop:neuron-merging} such that for the obtained abstract network $\redNN'$, it holds that:
\begin{equation*}
    \forall\x\in\nnInputSet\colon  \redNN(\x) \subseteq \redNN'(\x).
\end{equation*}
In particular, it holds that:
\begin{equation*}
    \certpred(\dnnvquery{\redNN'}{\nnInputSet}{Q}) \implies \certpred(\dnnvquery{\redNN}{\nnInputSet}{Q}).
\end{equation*}
\end{corollary}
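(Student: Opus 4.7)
The plan is to reduce the corollary to a single application of \cref{prop:neuron-merging} at layer $k$ together with a straightforward monotonicity argument for the downstream layers. By hypothesis, $\redNN$ already carries sound interval bounds $\bounds_k \supseteq \nnHiddenSetRed_k$ obtained via the IBP step in \eqref{eq:bounds-computation} applied to the abstract propagation rule \eqref{eq:rednn}. Thus the construction of \cref{prop:neuron-merging} is applicable to layer $k$ of $\redNN$: I choose a merge bucket $\bucket\subseteq[\numNeurons_k]$, drop those rows/columns from $\rednnW_k$, $\rednnb_k$, $\rednnW_{k+1}$ and absorb the missing contribution into a widened interval bias $\rednnb'_{k+1}\oplus\redBounds'_{k+1}$ at layer $k+1$. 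Call the resulting network $\redNN'$.

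Next, I would prove by induction on $j\in\{0,\ldots,\numLayers\}$ that $\nnHiddenSetRed_j(\x)\subseteq\nnHiddenSetRed'_j(\x)$ for every $\x\in\nnInputSet$. For $j<k$ the two networks coincide, so equality is trivial. For $j=k$ the non-merged coordinates agree, while the merged ones are recorded implicitly through $\bounds_{k(\bucket)}$ and only surface in the next step. The case $j=k+1$ is the heart of the argument and is exactly what \cref{prop:neuron-merging} provides: the affine contribution of the dropped neurons in $\redNN$ lies in $\nnW_{k+1(\cdot,\bucket)}\bounds_{k(\bucket)}$, which is precisely the extra summand appearing in $\redBounds'_{k+1}$ for $\redNN'$, so $\nnHiddenSetRed_{k+1}(\x)\subseteq\nnHiddenSetRed'_{k+1}(\x)$. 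For $j>k+1$ the layer maps $\mathcal{S}\mapsto\nnActFun_j(\rednnW_j\mathcal{S}\oplus\rednnb_j\oplus\redBounds_j)$ are identical in both networks and are monotone with respect to set inclusion, since affine images, Minkowski sums with a fixed set, and pointwise applications of $\nnActFun_j$ all preserve $\subseteq$. This propagates containment all the way to $\nnOutputSetRed\subseteq\nnOutputSetRed{}'$.

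Finally, the implication on $\certpred$ is an immediate consequence of \eqref{eq:unsat-rednn_vericiation}: if $\redNN'(\x)\cap\mathcal{Q}=\emptyset$ for every $\x\in\nnInputSet$ and $\redNN(\x)\subseteq\redNN'(\x)$, then $\redNN(\x)\cap\mathcal{Q}=\emptyset$ as well.

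The step I expect to be the most delicate is the $j=k+1$ induction case, because \cref{prop:neuron-merging} as stated is phrased over a crisp DNN $\nn$, whereas here it is being iterated on an already-abstract $\redNN$ whose layer-$k$ output is itself a set rather than a vector. The obstacle is notational rather than conceptual: one has to verify that the interval $\nnW_{k+1(\cdot,\bucket)}\bounds_{k(\bucket)}$ correctly over-approximates both the nominal contribution of the merged neurons and any pre-existing error already encoded in earlier $\redBounds$ terms. This reduces to checking that Minkowski sums distribute across the IBP update \eqref{eq:bounds-computation}, which is routine but must be written out carefully for the proof to be formally rigorous.
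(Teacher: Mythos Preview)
Your proposal is correct and follows essentially the same approach as the paper: establish equality of the layer outputs up to $k-1$, use \cref{prop:neuron-merging} (and the soundness of the IBP-computed $\redBounds_{k+1}$) to obtain $\nnHiddenSetRed_{k+1}\subseteq\nnHiddenSetRed'_{k+1}$, and then propagate containment through the unchanged downstream layers to derive the \certpred{} implication from~\eqref{eq:unsat-rednn_vericiation}. The only cosmetic differences are that the paper phrases the $k{+}1$ step as a short proof by contradiction and leaves the downstream monotonicity implicit (``subsequent layers are again identical''), whereas you unroll both as an explicit induction; your remark that \cref{prop:neuron-merging} must be read against an already-abstract $\redNN$ is apt and is exactly the notational slack the paper absorbs without comment.
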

\begin{proof}
    Let $\nnHiddenSetRed_k$ and $\nnHiddenSetRed'_k$ denote the output of the $k$-th layer of $\redNN$ and $\redNN'$, respectively.
 Note that \cref{prop:neuron-merging} only alters layer $k$ and $k+1$, thus, all other layers are identical between $\redNN$ and $\redNN'$~\eqref{eq:rednn}.
    In particular, we know that $\nnHiddenSetRed_{k-1} = \nnHiddenSetRed'_{k-1}$ holds.
    We now show that $\nnHiddenSetRed_{k+1} \subseteq \nnHiddenSetRed'_{k+1}$ holds by contradiction.
    Let us assume that there exist a $\bar{\h}_{k-1}\in\nnHiddenSetRed_{k-1}$ for which the respective $\bar{\h}_{k+1}\in\nnHiddenSetRed_{k+1}$ but $\bar{\h}_{k+1}\not\in\nnHiddenSetRed'_{k+1}$.
    However, this cannot be true as the values of the merged neurons are captured by $\redBounds_{k+1}$, which is computed over-approximative using IBP (\eqref{eq:bounds-computation}, \cref{prop:neuron-merging}),
    and all remaining neurons are kept equal (\cref{prop:neuron-merging}).
    Thus, $\nnHiddenSetRed_{k+1} \subseteq \nnHiddenSetRed'_{k+1}$ holds, which directly shows that $ \redNN(\x) \subseteq \redNN'(\x)$ as all subsequent layers are again identical.
    The implication directly follows due to the subset relation and~\eqref{eq:unsat-rednn_vericiation}. $\qed$
\end{proof}
  
\thispagestyle{empty}
\definecolor{lightblue}{RGB}{100, 150, 200}
        
    \tikzset{
        every label/.style={font=\footnotesize, fill=white, inner sep=1pt, outer sep=0pt},
        neuron/.style={circle, fill=lightblue, draw=lightblue, thick, inner sep=0.08cm},
        weight/.style={fill=white, font=\tiny, inner sep=1pt, outer sep=0pt},
        abstractneuron/.style={circle, fill=white, draw=lightblue, thick, inner sep=0.08cm},
        weight/.style={fill=white, font=\tiny, inner sep=1pt, outer sep=0pt},
    }

\begin{figure}[ht]
  \centering
  \begin{minipage}[b]{0.45\textwidth}
    \centering
    \begin{tikzpicture}[scale=1]
        \footnotesize
    
        \newcommand{\inputx}{0}
        \newcommand{\hiddenx}{2.5}
        \newcommand{\outputx}{4.5}
        \newcommand{\hya}{4.0}
        \newcommand{\hyb}{3.0}
        \newcommand{\hyc}{2.0}
        \newcommand{\hyd}{1.0}
        \newcommand{\hye}{0.0}
        \newcommand{\inya}{4.0}
        \newcommand{\inyb}{2.66}
        \newcommand{\inyc}{1.33}
        \newcommand{\inyd}{0.0}
        \newcommand{\outya}{2.0}
    
        \foreach \i/\y in {1/\inya, 2/\inyb, 3/\inyc, 4/\inyd} {
            \foreach \j/\yy in {1/\hya, 2/\hyb, 3/\hyc, 4/\hyd, 5/\hye} {
                \draw (\inputx,\y) -- (\hiddenx,\yy);
            }
        }
        \foreach \i/\y in {1/\inya, 2/\inyb, 3/\inyc, 4/\inyd} {
            \foreach \j/\yy in {1/\hya, 2/\hyb, 3/\hyc, 4/\hyd, 5/\hye} {
                \pgfmathsetmacro\w{{
                {-1, -1, -2, 0},
                {-1, -2, -1, -3},
                {0.1, 0, 0, 0},
                {0, 0.2, 0, 0},
                {1, 1, 1, -1.3}
                }[\j-1][\i-1]}
                \pgfmathsetmacro\pos{0.2 + 0.6*(mod(\i,2))}
                \draw[draw=none] (\inputx,\y) -- (\hiddenx,\yy)
                node[weight,pos=\pos] {\pgfmathprintnumber[precision=2]{\w}};
            }
        }
    
        \foreach \i/\y in {1/\hya, 2/\hyb, 3/\hyc, 4/\hyd, 5/\hye} {
            \foreach \j/\yy in {1/\outya} {
                \pgfmathsetmacro\w{{{1,1,-5,-5,1}}[\j-1][\i-1]}
                \pgfmathsetmacro\pos{0.5 + 0*(mod(\i,2))}
                \draw (\hiddenx,\y) -- (\outputx,\yy)
                node[weight,pos=\pos] {\pgfmathprintnumber[precision=2]{\w}};
            }
        }
    
        \node[neuron,label={\tiny{[0.9,1.1]}}] at (\inputx,\inya) {};
        \node[neuron,label={\tiny{[0.9,1.1]}}] at (\inputx,\inyb) {};
        \node[neuron,label={\tiny{[0.9,1.1]}}] at (\inputx,\inyc) {};
        \node[neuron,label={\tiny{[1,1]}}] at (\inputx,\inyd) {};
    
        \node[neuron,label={[xshift=5pt, yshift=2pt]\tiny{[0,0]}}] at (\hiddenx,\hya) {};
        \node[neuron,label={[xshift=5pt, yshift=2pt]\tiny{[0,0]}}] at (\hiddenx,\hyb) {};
        \node[neuron,label={[xshift=7pt, yshift=2pt]\tiny{[0.09,0.11]}}] at (\hiddenx,\hyc) {};
        \node[neuron,label={[xshift=7pt, yshift=2pt]\tiny{[0.18,0.22]}}] at (\hiddenx,\hyd) {};
        \node[neuron,label={[xshift=7pt, yshift=2pt]\tiny{[1.4,2]}}] at (\hiddenx,\hye) {};
        
        \node[neuron,label={\tiny{[0.05,0.35]}}] at (\outputx,\outya) {};
    
        \node at (\inputx,\hya+0.8) {\small Input};
        \node at (2.5,\hya+0.8) {\small Hidden};
        \node at (\outputx,\hya+0.8) {\small Output};
    
    \end{tikzpicture}
    \caption*{(a) Basic abstract network $\examplerednn_1$}
  \end{minipage}
  \hfill
  \begin{minipage}[b]{0.45\textwidth}
    \centering
    \begin{tikzpicture}
        \newcommand{\inputx}{0}
        \newcommand{\hiddenx}{2.5}
        \newcommand{\outputx}{4.5}
        \newcommand{\hya}{4.0}
        \newcommand{\hyb}{3.0}
        \newcommand{\hyc}{2.0}
        \newcommand{\hyd}{1.0}
        \newcommand{\hye}{0.0}
        \newcommand{\inya}{4.0}
        \newcommand{\inyb}{2.66}
        \newcommand{\inyc}{1.33}
        \newcommand{\inyd}{0.0}
        \newcommand{\outya}{2.0}
    
        \foreach \i/\y in {1/\inya, 2/\inyb, 3/\inyc, 4/\inyd} {
            \foreach \j/\yy in {1/\hyd, 2/\hye} {
                \draw (\inputx,\y) -- (\hiddenx,\yy);
            }
        }
        \foreach \i/\y in {1/\inya, 2/\inyb, 3/\inyc, 4/\inyd} {
            \foreach \j/\yy in {1/\hyd, 2/\hye} {
                \pgfmathsetmacro\w{{
                {0, 0.2, 0, 0},
                {1, 1, 1, -1.3}
                }[\j-1][\i-1]}
                \pgfmathsetmacro\pos{0.25 + 0*(mod(\i,2))}
                \draw[draw=none] (\inputx,\y) -- (\hiddenx,\yy)
                node[weight,pos=\pos] {\pgfmathprintnumber[precision=2]{\w}};
            }
        }
    
        \foreach \i/\y in {1/\hyd, 2/\hye} {
            \foreach \j/\yy in {1/\outya} {
                \pgfmathsetmacro\w{{{-5,1}}[\j-1][\i-1]}
                \pgfmathsetmacro\pos{0.5 + 0*(mod(\i,2))}
                \draw (\hiddenx,\y) -- (\outputx,\yy)
                node[weight,pos=\pos] {\pgfmathprintnumber[precision=2]{\w}};
            }
        }

        \draw[lightblue] (\hiddenx,\hyb) -- (\outputx,\outya) node[weight,pos=0.4] {\tiny{[-0.55,-0.45]} $\oplus$ \tiny{[0,0]}};
    
        \node[neuron,label={\tiny{[0.9,1.1]}}] at (\inputx,\inya) {};
        \node[neuron,label={\tiny{[0.9,1.1]}}] at (\inputx,\inyb) {};
        \node[neuron,label={\tiny{[0.9,1.1]}}] at (\inputx,\inyc) {};
        \node[neuron,label={\tiny{[1,1]}}] at (\inputx,\inyd) {};
    
        \node[abstractneuron,label={[xshift=5pt, yshift=0pt]\tiny{$\mathcal{B}=\{1,2,3\}$}}] at (\hiddenx,\hyb) {};
        \node[neuron,label={[xshift=7pt, yshift=2pt]\tiny{[0.18,0.22]}}] at (\hiddenx,\hyd) {};
        \node[neuron,label={[xshift=7pt, yshift=2pt]\tiny{[1.4,2]}}] at (\hiddenx,\hye) {};
        
        \node[neuron,label={\tiny{[-0.05,0.45]}}] at (\outputx,\outya) {};
    
        \node at (\inputx,\hya+0.8) {\small Input};
        \node at (2.5,\hya+0.8) {\small Hidden};
        \node at (\outputx,\hya+0.8) {\small Output};
    
    \end{tikzpicture}
    \caption*{(b) After CORA-abstraction $\examplerednn$}
  \end{minipage}
  \caption{Example of abstraction, given the input property $\nnInputSet_1$. The basic abstract network $\examplerednn_1$ (left) is reduced to another abstract network $\examplerednn$ (right).}
  \label{fig:abstraction-example}
\end{figure}  

\noindent \textbf{Example:} An example of abstraction for  $\examplenn$ is shown in \cref{fig:abstraction-example}. Given the input set $\nnInputSet_1$, the output bounds of the abstract network $\examplerednn_1$ contain the output bounds of the basic abstract network $\examplerednn_0$. The first three hidden neurons have similar bounds, making them a merge bucket $\mathcal{B}=\{1,2,3\}$, and are thus merged into an abstract neuron (in white). The set of bounds of the bucket ([0,0], [0,0] and [0.09,0.11]) is embedded into the bias ([0,0]) of the neuron in the output layer using IBP ($1\cdot[0,0]+1\cdot[0,0]-5\cdot[0.09,0.11]=[-0.55,-0.45]$) and Minkowski sum ($\oplus$), resulting with output bounds of $[-0.05,0.45]=-5\cdot[0.18,0.22]+1\cdot[1.4,2]+[-0.55,-0.45]$. This results in the final abstract network $\examplerednn$.

\section{Proving CORA Abstraction and Marabou Verification}\label{sec:implementation}

In this work, we focus on the proof-producing version of Marabou~\cite{WuIsZeTaDaKoReAmJuBaHuLaWuZhKoKaBa24,IsBaZhKa22}, a state-of-the-art verification tool with the ability to produce proofs of its \unsat{} results. 
The abstraction process used in this work was suggested in \cite{LaAl23} and is
implemented to improve set-based DNN verification and is part of CORA\cite{Al15}.
In this section, we explain how to implement $\provequery$ and
$\proveabst$ in \cref{alg:verify-then-prove-once} with these tools.

We start with explaining (in \cref{sec:apply_marabou_on_abstraction}) the details about the verification process in Marabou, and then show how to implement $\provequery$ and apply Marabou on an abstract network obtained by the abstraction process in CORA. In \cref{sec:coraproofrule}, we show how to implement $\proveabst$ and produce a proof that the abstraction process is correct. By doing so, we accomplish both necessary results as outlined in~\Cref{sec:fullproof}.

\subsection{Proving Correctness of Abstract Network Queries in Marabou}\label{sec:apply_marabou_on_abstraction}
The abstract networks obtained by the abstraction process in CORA generalize DNNs. 
As a result, the verification process should be adapted from solving $\dnnvquery{\nn}{P}{Q}$ and proving \eqref{eq:unsat-dnn_vericiation} to solving $\dnnvquery{\redNN}{P}{Q}$ and proving \eqref{eq:unsat-rednn_vericiation}. In the following, we show how the latter can be represented as a query that the Marabou verifier supports. This allows implementing $\provequery$, since the proofs generated by Marabou during the verification of $\dnnvquery{\redNN}{P}{Q}$ can be used as proofs for \eqref{eq:unsat-rednn_vericiation}.
Recall that Marabou  handles verification queries by trying to find satisfying assignments to the linear constraints in $\nn$ with LP methods\cite{Ch83,Gu24} and check the solution against the other, non-linear, constraints. 

There are two differences to consider when using Marabou to solve
queries over abstract networks. First, the output of an abstract
network is a set of vectors and not a single vector. This does not require any change in the verification query, as Marabou is capable of handling constraints that define continuous sets in both input and output.
Second, the abstract network $\redNN$ structure expresses biases as intervals or singletons, instead of scalars. To address this, we propose two options for encoding the verification query in a form supported by Marabou:

\begin{enumerate}
	\item 
	The architecture of the original network $\nn$ is encoded in Marabou using equations. Since the backend LP solver used in Marabou  supports linear inequalities, linear constraints in an abstract network are represented directly as inequalities, where the lower and upper bounds reflect the abstracted bias interval.
	More formally, suppose the bias term $\widehat{b}_{ki}$ in the abstract network~$\redNN$ lies in the interval $[\widehat{b}_{ki}^l, \widehat{b}_{ki}^u]$. We can then express the linear constraint as the following pair of inequalities:
	\[
	n_{ki} \ge \sum_j \nnW_{ji}^{k-1} x_j^{k-1} + \widehat{b}_{ki}^l, \quad \text{and} \quad
	n_{ki} \le \sum_j \nnW_{ji}^{k-1} x_j^{k-1} + \widehat{b}_{ki}^u.
	\]
	
	As an example, consider the output neuron in the network $\nn_1$, which is originally encoded by the equation:
	\[
	n_{\text{out}} = \sum_{i=1}^5 \nnW_{i1}^{1} n_{1i} + 0.0,
	\]
	where $n_{1i}$ denotes the $i$-th neuron in the hidden layer, and the final term is the bias. After converting $\nn_1$ into its abstract counterpart $\redNN_1$, the output neuron is instead represented using the pair of inequalities:
	\[
	n_{\text{out}} \ge \sum_{i=1}^5 \nnW_{i1}^{1} n_{1i} + 0.0, \quad \text{and} \quad n_{\text{out}} \le \sum_{i=1}^5 \nnW_{i1}^{1} n_{1i} + 0.0.
	\]
	since its bias lies in the interval $[0, 0]$. After applying further abstraction to obtain $\redNN'_1$, these inequalities are updated to reflect the new bounds:
	\[
	n_{\text{out}} \ge \sum_{i=1}^5 \nnW_{i1}^{1} n_{1i} - 0.05, \quad \text{and} \quad n_{\text{out}} \le \sum_{i=1}^5 \nnW_{i1}^{1} n_{1i} + 0.45.
	\]
	\item The Marabou solver supports skip connections, as these can be represented as additional linear constraints, which are seamlessly handled by the underlying LP solver. Consequently, for each bias term $\nnb_{ki}$ in an abstract network $\redNN$, we can introduce a fresh input variable $p_{ki}$ that is connected via a skip connection of weight 1 directly to the neuron $n_{ki}$. The set $\mathcal{P}$ is then extended with a constraint that enforces the interval bounds associated with $p_{ki}$.
	
	For example, in the network $\redNN'_1$, the bias of the output neuron is encoded through an additional input variable $p_{\text{out}}$, and $\mathcal{P}_1$ is updated to include the constraint $-0.05 \le p_{\text{out}} \le 0.45$.
\end{enumerate}

Both methods allow us to verify $\dnnvquery{\redNN}{P}{Q}$ directly
with Marabou; the first introduces additional inequalities, whereas the second requires a larger number of variables.

\subsection{Proving Correctness of the CORA Abstraction}
\label{sec:coraproofrule}
After explaining the implementation of $\provequery$ in the previous section, we are left with showing how $\proveabst$  for the CORA abstraction can be implemented. In this section, we describe this formalization.

A scheme of proof rules for a DNN with $L$ layers is depicted in~\Cref{fig:CORArule}.
As abstract DNNs are generalizations of DNNs, we first reason about any DNN and its trivial abstraction. For that, we use the proof rule $\mathsf{triv-abs}$, based on~\eqref{eq:corabasecase}.
Recall that the correctness of the CORA abstraction is established based on~\Cref{prop:neuron-merging} and~\Cref{lem:abstraction-correctnes-conditions}, applied sequentially to all layers of the network. This yields the main part of the proof, depicted in the $\mathsf{base-abs}$ and the $\mathsf{l_k-abs}$ rules. $\mathsf{base-abs}$ established the correctness of CORA for the first layer based on any hyperrectangle $\bounds_1$ bounding the input property $\nnInputSet$. 
Then, using $\mathsf{l_k-abs}$ repeatedly on each layer of the abstract network. Then, we can conclude the correctness of CORA using the $\mathsf{CORA-L}$ rule for a DNN with $L$ layers. Then, these rules can be integrated into the proof construction scheme described in~\Cref{sec:fullproof}.
To ease notation, we assume that all networks' internal variables are well defined as in~\eqref{eq:dnn} and~\eqref{eq:rednn}. To ease notation further, we define the following:
\begin{align*}  
	\nn(\x)& = \phi_L(\nnW_L \cdots \phi_1(\nnW_1 \x + \nnb_1) \cdots  + \nnb_L), \\
	\redNN_0(\x)& = \nnActFun_L(\nnW_L \cdots\ \nnActFun_1(\nnW_1 \x + \nnb_1 \oplus \{\mathbf{0}\}) \cdots\  + \nnb_L \oplus \{\mathbf{0}\}),\\
	\redNN_k(\x)& = \nnActFun_L(\nnW_L \cdots \nnActFun_k( \rednnW_K \cdots \nnActFun_1(\rednnW_1 \x + \rednnb_1 \oplus \redBounds_1) \cdots \rednnb_k \oplus \redBounds_k) \cdots  + \nnb_L \oplus \{\mathbf{0}\}),\\
	\redNN(\x)& = \redNN_L(\x) = \nnActFun_L(\rednnW_L  \cdots\ \nnActFun_1(\rednnW_1 \x + \rednnb_1 \oplus \redBounds_1) \cdots\ + \rednnb_L \oplus \redBounds_L)
\end{align*}
and use the left hand side as an abbreviation of the right hand side.

\noindent\emph{Proof checking.} In order to check a proof witness for CORA abstraction, the checker needs to receive the original DNN verification query $\dnnvquery{\nn}{\nnInputSet}{Q}$, as well as the candidate abstract network $\redNN$. Then, any intermediate abstract network $\redNN_k$ can be constructed during the checking process based on $\redNN$ and $\nn$. Note that in contrary to CORA, the proof checker is not required to construct abstract networks. In addition, it could be implemented with formal guarantees of correctness (e.g., by using arbitrarily-precise arithmetic for its computation), prioritizing reliability over scalability.

\begin{figure}[h!]
	\centering
	\scalebox{0.95}{
		\rulename{triv-abs:}
		\drule{
			\nn
			\ \
			\redNN_0
			\ \ 
			\x\in\R^{\numNeurons_0}
		}
		{
			\nn(\x)\in\redNN_0(\x)
		}
		\ \ \ \ 
		\rulename{base-abs:}
		\drule{
			\redNN_0
			\ \ 
			\redNN_1
			\ \ 
			\x \in \nnInputSet \subseteq \redBounds_1
		}
		{
			\forall x \in\nnInputSet. \ \redNN_0(\x) \subseteq \redNN_1(\x)
		}
	} 
	
	\vspace{0.4cm}
	\scalebox{0.9}{
		\rulename{l_k-abs:}
		\drule{
			k\in\{2,\cdots,L-1\}
			\ \
			\bucket\subset[\numNeurons_k]
			\ \
			\redNN_k
			\ \ 
			\redNN_{k+1}
			\\ 
			\rednnW_k = \nnW_{k(\bar{B},\cdot)}
			\ \
			\rednnb_k = \nnb_{k(\bar{B})}
			\ \
			\rednnW_{k+1} = \nnW_{k+1(\cdot,\bar{B})}
			\ \
			\rednnb_{k+1} = \nnb_{k+1}
			\\
			\redBounds_{k} = \redBounds_{k(\bar{\bucket})}
			\ \ 
			\redBounds_{k+1} = \nnW_{k+1(\cdot,\bucket)}\bounds_{k(\bucket)}
		}
		{
			\forall x \in\nnInputSet. \ \redNN_k(\x) \subseteq \redNN_{k+1}(\x)
		}
	} 
	\vspace{0.4cm}
	
	\scalebox{0.9}{
		\rulename{CORA-L:}
		\drule{
			\nn(\x)\in\redNN_0(\x)
			\ \
			\forall x \in\nnInputSet. \ \redNN_0(\x) \subseteq \redNN_1(\x)
			\ \ 
			\cdots
			\ \ 
			\forall x \in\nnInputSet. \ \redNN_{L-1}(\x) \subseteq \redNN(\x)  
			\ \
			\mathcal{Q}\subset\R^{\numNeurons_\numLayers}
		}
		{
			\certpred(\dnnvquery{\redNN}{\nnInputSet}{Q}) \implies \certpred(\dnnvquery{\nn}{\nnInputSet}{Q})             
		}
	} 
	\caption{A scheme of proof rules for CORA abstraction of a DNN with $L$ layers}
	\label{fig:CORArule}
\end{figure}
\noindent \textbf{Example:} an example for a proof, proving the abstraction presented  in~\Cref{fig:abstraction-example}, appears in~\Cref{fig:proofexample}. The proof uses the concrete DNN $\examplenn$ and the abstract networks $\examplerednn_0,\examplerednn_1,\examplerednn$.  For simplicity, we omit the direct definitions of all matrices. However, we can see that the underlying weight matrices of $\examplerednn$ can be obtained by removing rows from the weight matrices of $\examplenn$. Furthermore, $\redBounds_2 = [-0.55,-0.45]$ is indeed obtained (in~\Cref{fig:abstraction-example}) by multiplying the input bounds and their weights, for the indices corresponding to the bucket, i.e., we indeed have that $\redBounds_2 = \nnW_{2(\cdot,\bucket)}\redBounds_{1(\bucket)} $.

\begin{figure}[h!]
	\centering
	\scalebox{0.9}{
		\drule{
			\begin{tabular}[b]{cccc}
				\drule{
					\examplenn \quad \examplerednn_0 \quad \x \in \mathbb{R}^{4}
				}{
					\examplenn(\x) \in \examplerednn_0(\x)
				}
				&
				\drule{
					\examplerednn_0 \quad \examplerednn_1 \quad \x \in \nnInputSet_1 \coloneq \redBounds_1
				}{
					\forall x \in \nnInputSet.\ \examplerednn_0(\x) \subseteq \examplerednn_1(\x)
				}
				&
				\scalebox{0.95}{
					\drule{
						\bucket=\{1,2,3\}\quad \examplerednn_1 \quad \examplerednn \\
						\rednnW_1 = \nnW_{1(\bar{\bucket},\cdot)}\quad \rednnb_1 = \nnb_{1(\bar{\bucket})} \\
						\rednnW_2 = \nnW_{2(\cdot,\bar{\bucket})}\quad \rednnb_2 = \nnb_{2} \\
						\redBounds_1 = \redBounds_{1(\bar{\bucket})}\quad \redBounds_2 = \nnW_{2(\cdot,\bucket)}\bounds_{1(\bucket)}
					}{
						\forall x \in \nnInputSet.\ \examplerednn_1(\x) \subseteq \examplerednn(\x)
					}
				}
				&
				$\mathcal{Q}_1 \subset \mathbb{R}$
			\end{tabular}
		}{
			\certpred(\dnnvquery{\examplerednn}{\nnInputSet_\text{1}}{Q_\text{1}}) \implies \certpred(\dnnvquery{\examplenn}{\nnInputSet_\text{1}}{Q_\text{1}})
		}
	}
	\caption{An example of a proof of an abstraction, for the DNN $\examplenn$ and the abstract network $\examplerednn$.}
	\label{fig:proofexample}
\end{figure}

\section{Related Work}\label{sec:related-work}

This work builds on two main pillars in DNN verification: proof production and abstraction.

\emph{Proof production} is a well-established area in formal verification, particularly within SAT and SMT solvers~\cite{HeJhMaMc04, NiPrReZoBaTi19, BaReKrLaNiNoOzPrViViZoTiBa22} among many others, where the generation of proofs or certificates serves to improve trust in automated verification results. These proofs can be independently checked, enhancing the reliability of verification pipelines. Despite its importance in traditional verification, proof production remains largely unexplored in the context of DNN verification, and most existing tools do not provide formal proofs as part of their output.

\emph{Abstraction}\cite{CoCo77,ClGrLo94,ClGrJhLuVe00} is a classical technique in formal verification, widely used to tackle scalability and complexity challenges. In the domain of DNN verification, abstraction has been actively studied through two main lenses. The first is \emph{abstract interpretation}, which over-approximates neural network behavior using abstract domains~\cite{GeMiDrTsChVe18, SiGePuVe19}. The second is \emph{abstraction refinement}, where the verifier incrementally refines the abstraction based on counterexamples or property violations, improving precision over time~\cite{ElGoKa22, AsHaKrMo20, ElCoKa22, OsBaKa22, CoElBaKa23, LaAl23, LiXiShSoXuMi24}. These techniques have proven effective in improving both scalability and verification success rates.

However, the intersection of abstraction and proof production has received very limited attention, in both directions: how proofs can influence abstraction, and how abstraction can contribute to proof construction. An early example of the former, in the SAT domain, is the work of~\cite{HeJhMaMc04}, where proofs are used to guide and refine the abstraction process.

As for the latter, our work is, to the best of our knowledge, among the first to investigate how abstraction mechanisms can be directly integrated into the production of formal proofs in the context of DNN verification. In a recent work~\cite{SiSaMeSi25}, a Domain Specific Language (DSL), designed for defining and certifying the soundness of abstract interpretation DNN verifiers, is introduced and evaluated over several DNN verifiers. This work is focused on proving DNN verifiers that employ linear over-approximations of activation functions, while our method focuses on separate proofs for neuron-merging abstraction process, and for the verification process. An interesting future work would be to formalize our scheme using the DSL of~\cite{SiSaMeSi25}.

\section{Conclusion and Future Work}\label{sec:conclusion}

This work-in-progress aims to bridge abstraction and proof production in DNN verification. On one hand, incorporating abstraction enhances the efficiency and compactness of proof production in the formal verification of neural networks. On the other hand, proofs can be generated for verification tools that apply abstraction to improve performance of reliable verifiers. To achieve this, we allow the abstraction process itself to be certified. By introducing a modular proof rule that separates the verification proof from the abstraction proof, we establish a foundation for generating complete proofs while using abstraction to aid in their construction. This modular approach allows integration with existing proof-producing verifiers for the verification component, while enabling the development of novel proof mechanisms specific to abstraction.
We presented 
  a general algorithm for abstraction-refinement-based proof production in DNN verification and demonstrated how it can be instantiated using current tools for both proof generation and abstraction.

The next steps of our work include the implementation and evaluation of our method with respect to proof size, verification time, and proof-checking time; over real-world benchmarks. 
Looking forward, we identify two promising directions for future work. First, integrating residual reasoning~\cite{ElCoKa22} could improve the effectiveness of abstraction refinement procedures. Second, leveraging abstract proofs within CDCL-based frameworks~\cite{IsReWuBaKa25,LiYaZhHu24} offers a compelling avenue for bridging abstraction and clause learning-based proof systems.

\section*{Acknowledgements}
The research presented in this paper was partially funded by the project FAI under project number 286525601 funded by the German Research Foundation (Deutsche Forschungsgemeinschaft, DFG). 

This work was partially funded by the European Union (ERC, VeriDeL, 101112713). Views and opinions expressed are however those of the author(s) only and do not necessarily reflect those of the European Union or the European Research Council Executive Agency. Neither the European Union nor the granting authority can be held responsible for them.

This work was performed in part using high-performance computing equipment obtained under NSF Grant \#2117377.

\bibliographystyle{splncs04}
\bibliography{main}
 
 \newpage
 \appendix

\end{document}